\theoremstyle{definition}
\newtheorem{defi}{Definition}[section]
\theoremstyle{plain}
\newtheorem{satz}[defi]{Theorem}
\theoremstyle{plain}
\newtheorem{propo}[defi]{Proposition}
\theoremstyle{plain}
\newtheorem{coro}[defi]{Corollary}
\theoremstyle{remark}
\newtheorem{bem}[defi]{Remark}
\begin{document}

\title{A Relation between Short-Term and Long-Term Arbitrage}
\author{P. Liebrich \\ \small Frankfurt Institute for Advanced Studies \\ \small Ruth-Moufang-Strasse 1, 60438 Frankfurt am Main, Germany}
\maketitle

\begin{abstract}
In this work a relation between a measure of short-term arbitrage in the market and the excess growth of portfolios as a notion of long-term arbitrage is established. The former originates from "Geometric Arbitrage Theory" \cite{GIGA} and the latter from "Stochastic Portfolio Theory" \cite{SPT}. Both aim to describe non-equilibrium effects in financial markets. Thereby, a connection between two different theoretical frameworks of arbitrage is drawn.
\end{abstract}

\section{Geometric Arbitrage Theory or short-term arbitrage}
The conventional mathematical theory of arbitrage (as e.~g. in \cite{DelbaenSchachermayer}) is actually a theory of asset pricing. The no-arbitrage property is postulated and then reformulated in terms of pricing rules for the market and single securities. The conventional way of study is to then introduce friction effects like transaction costs and liquidity constraints to adapt these rules to real markets.

There has then grown a literature on \textbf{gauge theoretic} and \textbf{differential geometric} models of \textbf{arbitrage} to do the opposite, allow arbitrage in the market formulation and find measures such that the classical no-arbitrage results are recovered in the respective limit (see e.~g. \cite{Ilinski, Young, SmithSpeed, GAT}). They are mostly ignored in classical finance since there is some arbitrariness in how to value arbitrage in a market and those models mostly approach it from the physical side. However, the article \cite{GIGA} built a general measure of arbitrage inspired by these considerations and derived an arbitrage strategy for frictionless markets based on that, thus giving justification for it. Take a filtered probability space $(\Omega,F,(F_t)_{t\in[0,\infty)},\mathbb{P})$ that satisfies the usual conditions. The model that underlies the securities' dynamics $S^\mu$ is given by \textbf{It\=o processes}. For convenience they shall be written in the form of \textbf{geometric Brownian motions} $$dS\indices{_t^\mu} = S\indices{_t^\mu} \left(\alpha\indices{_t^\mu} dt + \sigma\indices{_t^\mu_a} dW\indices{_t^a}\right), \quad \forall t \in [0,\infty), \mu=0,\ldots,n.$$ Note that the indices are spaced and they are implicitly summed over when one appears in an upper and one in a lower position, following physical notations. Both the drifts $\alpha^\mu$ as well as the covariances $\sigma\indices{^\mu_a}$ can depend on time $t$ and the state $\omega$ (adapted to the filtration) and can therefore also depend on the price itself to recover the above made claim of an It\=o process.

The whole market is assumed to consist of $m$ iid (standard) Brownian motions $W^a, a=1,\ldots,m$. The crucial assumption in this model is $m \leq n$ so that there is really an interdependence between the securities. This then leads to arbitrage opportunities when the corresponding basket of other securities yields a different price than the considered security itself. In practice, it may be one of the most challenging tasks to find the correct number of interdependences.

The geometric consideration comes into play when one observes that the stochastic differentials $dS\indices{_t^\mu}$ act like "cotangent vectors" in the "securities' portfolio manifold". As the stochastic differentials themselves are only notational objects, this has to be understood as a modeling idea. Nonetheless, the following quantities are defined in an exact manner. The idea is to decompose it into parts that evolve \emph{with the randomness} and parts \emph{orthogonal to it}. Therefore, they define the average and deviation quantities

\begin{defi}
\begin{align*}
\alpha_t &:= \frac{1}{n+1} \sum_{\mu=0}^n \alpha\indices{_t^\mu} \\
\sigma_{t,a} &:= \frac{1}{n+1} \sum_{\mu=0}^n \sigma\indices{_t^\mu_a} \\
\hat{\sigma}\indices{_t^\mu_a} &:= \sigma\indices{_t^\mu_a} - \sigma_{t,a}.
\end{align*}
\end{defi}

$\alpha_t$ can be regarded as the market return as well as $\sigma_{t,a}$ as a market volatility in the $a$-th Brownian motion. Regression against the specific randomness $\hat{\sigma}\indices{_t^\mu_a}$ yields

\begin{defi}\label{AlphaSplit}
$$\alpha\indices{_t^\mu} = \alpha_t + \beta\indices{_t^a} \hat{\sigma}\indices{_t^\mu_a} + \alpha_{t,A} J\indices{_t^{\mu A}}$$
for regression coefficients $\beta\indices{_t^a}, a=1,\ldots,m$ and orthonormal vectors $J\indices{_t^{\mu A}}, A=1,\ldots,k, \mu=0,\ldots,n$ with
\begin{align*}
J\indices{_{t,\mu}^A} J\indices{_t^{\mu B}} &= \delta^{AB} \\
\sum_{\mu=0}^n J\indices{_{t,\mu}^A} &= 0 \\
J\indices{_{t,\mu}^A} \hat{\sigma}\indices{_t^\mu_a} &= 0.
\end{align*}
\end{defi}

The dimension $k \leq n+1$ will in practice have to be guessed and measures the degree of dependence. It is a number for the dependent securities. So it can be assumed that often $k$ is very small. The last quantity $$\alpha\indices{_t^A} := J\indices{_{t,\mu}^A} \alpha\indices{_t^\mu}$$ is shown to be a measure of arbitrage. If all securities are independent, there will be nothing left; if there are many dependencies, there is a lot of potential arbitrage. The idea for the proof is to consider a change of numeraire by discounting with various portfolios. All other contributions in definition \ref{AlphaSplit} can be gauged away by a suited choice of numeraire and only the above term unavoidably remains. But they could also show a concrete trading strategy yielding the wealth $$V_T = \int_0^T \alpha_{t,A} \alpha\indices{_t^A} dt.$$ Their numerical studies showed that the arbitrage effects captured by this measure last for up to a minute and tend to vanish on larger timescales. It will be shown later that this contribution again shows up on macroscopic timescales in a portfolio context.

Another aspect of these theories is to rewrite arbitrage as the curvature of a suited gauge connection. Actually, this is the heart of those theories. A crucial requirement for trading strategies in arbitrage theory is the self-financing condition. This can geometrically be viewed as staying on the right path. As such, a covariant derivative seems appropriate, which keeps track of that path. The specific construct stems from \cite{Malaney, Weinstein}. They initially tackled the problem of comparing different economies' growths. A proper index should keep certain baskets fixed. This means that they can be seen as being parallel transported along a curve in the space spanned by prices and their weights. So the connection $A$ shall satisfy $$D_\gamma V = (d - A)|_\gamma V = 0$$ along a self-financing trajectory $\gamma$ with wealth $V$. The object of interest is originally defined as

\begin{defi}[Malaney-Weinstein connection]\label{MWconnection}
Let $(\phi_t)_{t\in[0,\infty)}$ be a predictable trading strategy and $V_t := \phi_{t,\mu} S\indices{_t^\mu}$ the wealth at each instant of time. The \textbf{Malaney-Weinstein connection} is defined as $$A_t := \frac{\phi_{t,\mu} dS\indices{_t^\mu}}{V_t}$$ and the \textbf{Malaney-Weinstein curvature} is $$R_t := dA_t$$ for all $t\in[0,\infty)$.
\end{defi}

Note that the connection is Abelian, so the definition of the curvature is really given by only the differential. The advance of \cite{GIGA} was to introduce a "stochastic" version of it. But it is still improperly defined, so formulate here the correct version. As noted in \cite{GAT}, the correct notion of derivative is given by Nelson's one \cite{Nelson}. But an analogous integral is needed as well:

\begin{defi}[Nelson derivative and normal expectation]\label{NelsonDE}
Let $\mathbb{P}^*$ be an equivalent probability measure on $(\Omega,F_t)$ for some $t\in\mathbb{R}_{\geq0}$, $(W_t^{*a})_{t\in[0,\infty)}$, $a=1,\ldots,m$ standard Brownian motions under $\mathbb{P}^*$ and let $(X_t)_{t\in[0,\infty)}$ be an adapted stochastic process. The \textbf{Nelson derivative} of $X$ at $t$ with respect to $\mathbb{P}^*$ is $$\mathcal{D}_{\mathbb{P}^*} X_t := \lim_{h\to0} \frac{\mathbb{E}_{\mathbb{P}^*}[X_{t+h} - X_t \mid F_t]}{h}$$ and the \textbf{normal expectation} of $X$ at $t$ with respect to $\mathbb{P}$ and $\mathbb{P}^*$ is $$\mathcal{E}_{\mathbb{P},\mathbb{P}^*}[X_t \mid F_t] := \mathbb{E}_{\mathbb{P}^*}[X_t \mid W\indices{_s^a} = W_s^{*a} = 0 \, \forall s \leq t, a=1,\ldots,m].$$
\end{defi}

Then the following statement holds:

\begin{satz}[Stochastic parallel transport of wealth]\label{Theorem3.1}
Define a time horizon $T \in \mathbb{R}_{>0}$, a shifted drift $\alpha_t^* := \alpha_t - \beta\indices{_t^a} \sigma_{t,a},$ shifted Brownian motions $W_t^{*a} := W\indices{_t^a} + \int_0^t \beta\indices{_s^a} ds$ and an equivalent martingale measure $\mathbb{P}^*$ defined by $$\frac{d\mathbb{P}}{d\mathbb{P}^*} := \exp\left(-\frac{1}{2} \int_t^T \beta_{s,a} \beta\indices{_s^a} ds + \int_t^T \beta_{s,a} dW_s^{*a}\right).$$ Moreover, let the Novikov condition $\mathbb{E}\left[e^{\frac{1}{2} \int_0^T \left|\beta_t\right|^2 dt}\right] < \infty$ hold true. If $(\phi_t)_{t\in[0,T]}$ is a self-financing strategy, then for all $t\in[0,T]$ the \textbf{stochastic Malaney-Weinstein connection} $$\Gamma_t := \frac{\phi_{t,\mu} \mathcal{D}_{\mathbb{P}^*} S\indices{_t^\mu}}{V_t}$$ satisfies $$\Gamma_t = \left(\frac{\sum_{\mu=0}^n (\sum_{A=1}^k) \alpha_{t,A} J\indices{_t^{\mu A}} \phi_{t,\mu} S\indices{_t^\mu}}{V_t} + \alpha_t^*\right) dt$$ and the value process obeys $$V_t = \mathcal{E}_{\mathbb{P},\mathbb{P}^*}\left[V_T e^{-\int_t^T \Gamma_s} \mid F_t\right].$$
\end{satz}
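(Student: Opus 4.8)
The plan is to compute $\Gamma_t$ directly from its definition and then to establish the value-process identity through a martingale argument under $\mathbb{P}^*$. First I would invoke the Girsanov theorem: by the stated Radon--Nikodym density together with the Novikov condition, $W^{*a}_t = W^a_t + \int_0^t \beta^a_s\,ds$ are $\mathbb{P}^*$-Brownian motions, so that $dW^a_t = dW^{*a}_t - \beta^a_t\,dt$. Substituting this into the geometric-Brownian dynamics rewrites each security as $dS^\mu_t = S^\mu_t\big((\alpha^\mu_t - \sigma^\mu_{t,a}\beta^a_t)\,dt + \sigma^\mu_{t,a}\,dW^{*a}_t\big)$, which exhibits the $\mathbb{P}^*$-drift explicitly.

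The Nelson derivative then simply reads off this drift: since the $\mathbb{P}^*$-martingale part of $S^\mu$ has vanishing conditional increment in the limit $h\to0$, one obtains $\mathcal{D}_{\mathbb{P}^*}S^\mu_t = S^\mu_t(\alpha^\mu_t - \sigma^\mu_{t,a}\beta^a_t)$. Now I would insert the regression split of Definition \ref{AlphaSplit}, $\alpha^\mu_t = \alpha_t + \beta^a_t\hat\sigma^\mu_{t,a} + \alpha_{t,A}J^{\mu A}_t$, together with $\sigma^\mu_{t,a} = \hat\sigma^\mu_{t,a} + \sigma_{t,a}$. The key cancellation is that the term $\beta^a_t\hat\sigma^\mu_{t,a}$ coming from the drift meets $-\hat\sigma^\mu_{t,a}\beta^a_t$ from the measure change and drops out, while $\alpha_t - \sigma_{t,a}\beta^a_t = \alpha^*_t$. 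Hence $\mathcal{D}_{\mathbb{P}^*}S^\mu_t = S^\mu_t(\alpha^*_t + \alpha_{t,A}J^{\mu A}_t)$. Dividing by $V_t$ and using the wealth identity $\phi_{t,\mu}S^\mu_t = V_t$ turns the $\alpha^*_t$ contribution into $\alpha^*_t$ itself, since the portfolio weights sum to one, leaving precisely the claimed connection once the canonical factor $dt$ is attached.

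For the value process I would exploit the self-financing property $dV_t = \phi_{t,\mu}\,dS^\mu_t$. Under $\mathbb{P}^*$ this gives $dV_t = V_t\gamma_t\,dt + \phi_{t,\mu}S^\mu_t\sigma^\mu_{t,a}\,dW^{*a}_t$, where $\gamma_t\,dt = \Gamma_t$. Applying It\=o's product rule to the discounted wealth $V_t\exp(-\int_0^t\gamma_s\,ds)$ makes the drift cancel exactly, so this process is a $\mathbb{P}^*$-local martingale; the Novikov-type integrability then upgrades it to a true martingale. The martingale property between $t$ and $T$ yields $V_t = \mathbb{E}_{\mathbb{P}^*}\big[V_T\exp(-\int_t^T\Gamma_s)\mid F_t\big]$, and it remains to identify this conditional expectation with the normal expectation of Definition \ref{NelsonDE}.

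The main obstacle is exactly this last identification. The Nelson derivative retains only the current-velocity (drift) information, so the reconstruction of $V_t$ from its terminal value should not be read as the naive $F_t$-conditional expectation but as the normal expectation, which conditions on $W^a_s = W^{*a}_s = 0$ for $s \le t$ and thereby evaluates the parallel transport along the noise-free skeleton while still averaging the residual fluctuations on $[t,T]$. Making this rigorous requires checking that conditioning on the vanishing path is compatible with the $\mathbb{P}^*$-martingale identity, i.e. that the drift-only reconstruction furnished by $\mathcal{E}_{\mathbb{P},\mathbb{P}^*}$ is consistent with the genuine conditional expectation on the relevant event; this is the delicate measure-theoretic heart of the statement. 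A secondary technical point is securing enough integrability beyond the Novikov condition, which by itself only guarantees the change of measure, to promote the local martingale to a true one.
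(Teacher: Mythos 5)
The paper does not actually prove this theorem; its ``proof'' is a citation to \cite[Theorem 3.1]{GIGA}, so the only internal point of comparison is the paper's proof of the analogous portfolio statement (theorem \ref{PortStockMW}). Your computation of $\Gamma_t$ is correct and is exactly the mechanism used there: Girsanov turns the drift into $\alpha\indices{_t^\mu}-\sigma\indices{_t^\mu_a}\beta\indices{_t^a}$, the regression split of definition \ref{AlphaSplit} cancels the $\beta\indices{_t^a}\hat{\sigma}\indices{_t^\mu_a}$ term, and $\alpha_t-\sigma_{t,a}\beta\indices{_t^a}=\alpha_t^*$ survives together with $\alpha_{t,A}J\indices{_t^{\mu A}}$. (One small slip: for the wealth version the $\alpha_t^*$ term simplifies because $\phi_{t,\mu}S\indices{_t^\mu}=V_t$ by definition, not because ``weights sum to one''--- that is the portfolio-version phrasing.) For the value process you take a genuinely different and cleaner route: the paper's own argument for theorem \ref{PortStockMW} works pathwise with an auxiliary asset $\Lambda$, ends up with stochastic integrals $b_{s,a}dW\indices{_s^a}+\beta_{s,a}dW_s^{*a}$ inside the exponent, and then invokes the normal expectation $\mathcal{E}$ precisely to annihilate them; your martingale argument for $V_te^{-\int_0^t\gamma_s ds}$ under $\mathbb{P}^*$ avoids these residual terms entirely and yields the clean identity with the ordinary conditional expectation $\mathbb{E}_{\mathbb{P}^*}[\,\cdot\mid F_t]$ in one step.

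The one genuine gap is the final identification you yourself flag, and it is not merely technical. Your argument proves $V_t=\mathbb{E}_{\mathbb{P}^*}[V_Te^{-\int_t^T\Gamma_s}\mid F_t]$, whereas the theorem asserts the same with $\mathcal{E}_{\mathbb{P},\mathbb{P}^*}$, which by definition \ref{NelsonDE} conditions on the $\mathbb{P}^*$-null event $\{W\indices{_s^a}=W_s^{*a}=0\ \forall s\le t\}$. Read literally, that operator returns the value of $\mathbb{E}_{\mathbb{P}^*}[\,\cdot\mid F_t]$ evaluated along the zero-noise path, i.e.\ a deterministic quantity, which cannot equal the random variable $V_t$; so the two identities do not follow from one another, and the $\mathcal{E}$-version requires either a more careful (regular-conditional-probability) reading of definition \ref{NelsonDE} or the interpretation implicit in the paper's proof of theorem \ref{PortStockMW}, where $\mathcal{E}$ is used operationally to delete the leftover Brownian integrals. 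You should either prove the statement in the $\mathbb{E}_{\mathbb{P}^*}$ form (which your argument fully supports, modulo the integrability needed to upgrade the local martingale, as you note Novikov alone does not supply) or make explicit which reading of $\mathcal{E}$ you adopt; as it stands the last step is asserted rather than established.
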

\begin{proof}
Although the notation is modified, the statement is given in \cite[Theorem 3.1]{GIGA}.
\end{proof}

\section{Stochastic Portfolio Theory or long-term arbitrage}
Another theory that handles markets with arbitrage, here on the classical stochastic level, is slowly entering mainstream mathematical finance. \textbf{Stochastic Portfolio Theory} \cite{SPT} seems to really be used in asset management as the founder of the theory has established successful investment management companies following that philosophy.

The \textbf{market} is again given by $n$ securities $S^i$, whereas a numeraire $S^0$ is not needed any more. (Roman indices generally begin at index 1 and Greek indices at 0.) Also, the assumption on the dynamics is geometric Brownian motions in the form $$dS\indices{_t^i} = S\indices{_t^i} \left(\alpha\indices{_t^i} dt + \sigma\indices{_t^i_a} dW\indices{_t^a}\right)$$ with (possibly) stochastic mean and covariance processes $(\alpha\indices{_t^i})_{t\in[0,\infty)}$ and $(\sigma\indices{_t^i_a})_{t\in[0,\infty)}$. So the theories are compatible with each other.

The parametrization used in the original treatise can be recovered by defining the \textbf{growth rate} $$\gamma\indices{_t^i} := \alpha\indices{_t^i} - \frac{1}{2} \sigma\indices{_t^i_a} \sigma\indices{_t^{ia}}.$$ It is often preferred to work with the logarithmic dynamics because it makes many things easier. In the whole theory, the time average is the more central object than the expectation value. Therefore, the law of iterated logarithm and other theorems come in handy.

The major difference to the short-term arbitrage model above is that the number $m$ of Brownian motions is assumed to equal the number of securities $m=n$ \cite{SPT} or is bigger $m \geq n$ \cite{SPTSurvey}. The reasoning is that in Stochastic Portfolio Theory there shall not exist trivial arbitrage. But in principle that would not necessarily have to lead to "easy" arbitrage and may even be realistic in real markets with their several dependencies. It is actually no essential assumption and most theorems remain valid for $m < n$.

Arbitrage in Stochastic Portfolio Theory comes into play in another way: It is theoretically allowed to build the portfolio \emph{after} knowing the prices. The rationale is that in long-term portfolios with little to no rebalancing this is not a big deal.

The object in the name of that theory is

\begin{defi}[Portfolio]\label{Portfolio}
Let $(\pi\indices{_t^i})_{t\in[0,\infty)}$ be bounded adapted processes for all $i=1,\ldots,n$ such that $$\sum_{i=1}^n \pi\indices{_t^i} = 1.$$ The process $\pi = (\pi^1, \ldots, \pi^n)$ is called a \textbf{portfolio}\index{portfolio!SPT}. The process $(Z\indices{_t^\pi})_{t\in[0,\infty)}$ that solves the SDE $$dZ\indices{_t^\pi} = \sum_{i=1}^n \pi\indices{_t^i} Z\indices{_t^\pi} \frac{dS\indices{_t^i}}{S\indices{_t^i}}$$ is called \textbf{portfolio value}\index{portfolio!value!SPT}.
\end{defi}

Stochastic Portfolio Theory deals with the long-term behavior of a market and portfolios. An object of interest is the \textbf{long-term growth rate} $\lim_{T\to\infty} \frac{1}{T} \log Z\indices{_T^\pi}$. And there are also different notions of arbitrage \cite{SPTSurvey}:

\begin{defi}[Relative arbitrage]
Let $\rho$ and $\pi$ be portfolios with $Z\indices{_0^\pi} = Z\indices{_0^\rho}$. $\pi$ is called a \textbf{relative arbitrage opportunity over a fixed horizon} $T \in \mathbb{R}_{>0}$ with respect to $\rho$ if there exists a number $q > 0$ such that $$\mathbb{P}(Z\indices{_T^\pi} \geq Z\indices{_T^\rho}) = 1, \quad \mathbb{P}(Z\indices{_T^\pi} > Z\indices{_T^\rho}) > 0,$$ and $$Z\indices{_t^\pi} \geq q Z\indices{_t^\rho} \quad \forall t\in[0,T] \text{ a.~s.}$$ $\pi$ is called a \textbf{superior long-term growth opportunity} with respect to $\rho$ if $$\underset{T\to\infty}{\lim\inf} \frac{1}{T} \log \frac{Z\indices{_T^\pi}}{Z\indices{_T^\rho}} > 0 \quad \text{a.~s.}$$
\end{defi}

Such arbitrage opportunities are mostly measured against the

\begin{defi}[Market portfolio]
The portfolio $(\mu\indices{_t^1},\ldots,\mu\indices{_t^n})_{t\in[0,\infty)}$ with $$\mu\indices{_t^i} := \frac{S\indices{_t^i}}{\sum_{j=1}^n S\indices{_t^j}}$$ is called the \textbf{market portfolio}. The market is called \textbf{coherent} if $$\lim_{T\to\infty} \frac{1}{T} \log \mu\indices{_T^i} = 0 \quad \text{a.~s. } \forall i=1,\ldots,n$$ holds. The market is called \textbf{non-degenerate} if there is an $\epsilon > 0$ such that $$x^i x^j \sigma_{t,ia} \sigma\indices{_{t,j}^a} \geq \epsilon x_i x^i \quad \forall t \in [0,\infty) \text{ a.~s.}$$
\end{defi}

It is an empirical fact that market-weighted portfolios usually underperform other weightings (see e.~g. \cite{RandomPaper}). This observation can also be shown within Stochastic Portfolio Theory:

\begin{satz}[Performance of the relative growth]
In a coherent and non-degenerate market, any portfolio $\pi$ that is constant over time with non-negative and at least two positive components, is a superior long-term growth opportunity with respect to the market portfolio $\mu$.
\end{satz}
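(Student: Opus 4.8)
The plan is to reduce the long-horizon comparison of $\pi$ and $\mu$ to the time-integrated \emph{excess growth rate} of $\pi$, and then to bound that integrand away from zero using non-degeneracy. Throughout I abbreviate the instantaneous covariances by $\sigma_{t,i}\cdot\sigma_{t,j} := \sigma_{t,ia}\sigma\indices{_{t,j}^a}$ (contraction over the Brownian index $a$), write $\sigma_t^p := \sum_i p^i \sigma_{t,i}$ for the volatility vector of a portfolio $p$, and set $\sigma_t^p\cdot dW_t := \sum_i p^i \sigma_{t,ia}dW\indices{_t^a}$. First I would apply It\=o's formula to $\log Z^\pi$. Using Definition \ref{Portfolio} together with $\alpha\indices{_t^i} = \gamma\indices{_t^i} + \tfrac12\,\sigma_{t,i}\cdot\sigma_{t,i}$, the logarithmic value obeys $d\log Z_t^\pi = \left(\sum_i \pi^i\gamma_t^i + \gamma_t^{*\pi}\right)dt + \sigma_t^\pi\cdot dW_t$, with excess growth rate $\gamma_t^{*\pi} := \tfrac{1}{2}\left(\sum_i \pi^i\,\sigma_{t,i}\cdot\sigma_{t,i} - \sum_{i,j}\pi^i\pi^j\,\sigma_{t,i}\cdot\sigma_{t,j}\right)$. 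The same computation for the market portfolio, combined with $\mu_t^i = S_t^i/\sum_j S_t^j$ (so that $Z^\mu$ tracks the total capitalization), gives $d\log\mu_t^i = (\gamma_t^i - \gamma_t^\mu)\,dt + (\sigma_{t,i}-\sigma_t^\mu)\cdot dW_t$.

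Next I would subtract the two log-dynamics and compare term by term. The martingale parts combine as $(\sigma_t^\pi-\sigma_t^\mu)\cdot dW_t = \sum_i\pi^i(\sigma_{t,i}-\sigma_t^\mu)\cdot dW_t$, and the non-excess drift as $\sum_i\pi^i\gamma_t^i - \gamma_t^\mu$; these are precisely the ingredients of $\sum_i\pi^i\,d\log\mu_t^i$. Hence $d\log(Z_t^\pi/Z_t^\mu)$ collapses to the exact differential $\sum_i\pi^i\,d\log\mu_t^i$ plus $\gamma_t^{*\pi}\,dt$, and integrating yields Fernholz's master identity for the constant portfolio $\pi$ (which is functionally generated by the weighted geometric mean $G(x)=\prod_i x_i^{\pi^i}$):
$$\log\frac{Z_T^\pi}{Z_T^\mu} = \sum_{i=1}^n \pi^i\log\frac{\mu_T^i}{\mu_0^i} + \int_0^T \gamma_t^{*\pi}\,dt,$$
where I normalize $Z_0^\pi = Z_0^\mu$ (any constant offset is irrelevant to the growth rate).

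Finally I would divide by $T$ and take $\liminf_{T\to\infty}$. Coherence forces $\tfrac1T\log\mu_T^i \to 0$ a.s., so the finitely many boundary terms vanish, and since this part converges it suffices to bound $\tfrac1T\int_0^T\gamma_t^{*\pi}\,dt$ from below. The crux — and the step I expect to be the main obstacle — is a uniform strictly positive lower bound on $\gamma_t^{*\pi}$. I would rewrite it as a variance, $\gamma_t^{*\pi} = \tfrac12\sum_i\pi^i\left|\sigma_{t,i}-\sigma_t^\pi\right|^2 \geq 0$, which gives non-negativity but not strict positivity; this is exactly where both hypotheses enter. Choosing indices $k,l$ with $\pi^k,\pi^l>0$ (available because $\pi$ has at least two positive components) and $c:=\min(\pi^k,\pi^l)$, the parallelogram inequality gives $\gamma_t^{*\pi}\geq \tfrac{c}{4}\left|\sigma_{t,k}-\sigma_{t,l}\right|^2$, while non-degeneracy applied to $x = e_k - e_l$ yields $\left|\sigma_{t,k}-\sigma_{t,l}\right|^2 = x^ix^j\sigma_{t,ia}\sigma\indices{_{t,j}^a}\geq 2\epsilon$. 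Thus $\gamma_t^{*\pi}\geq \tfrac{c\epsilon}{2}>0$ for all $t$ a.s., so $\liminf_{T\to\infty}\tfrac1T\int_0^T\gamma_t^{*\pi}\,dt \geq \tfrac{c\epsilon}{2}$, and therefore $\liminf_{T\to\infty}\tfrac1T\log(Z_T^\pi/Z_T^\mu)\geq \tfrac{c\epsilon}{2}>0$, which is the claim. The remaining points are purely technical: that constant weights make $\pi$ admissible, and that coherence legitimately sends the boundary terms to zero.
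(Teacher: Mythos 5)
Your proof is correct, and it is essentially the argument behind the result the paper merely cites (the paper's ``proof'' is just a pointer to \cite[Proposition 2.1.9]{SPT}): the exact master identity $\log(Z_T^\pi/Z_T^\mu)=\sum_i\pi^i\log(\mu_T^i/\mu_0^i)+\int_0^T\gamma_t^{*\pi}\,dt$ for a constant portfolio, coherence to kill the boundary terms, and a uniform positive lower bound on the excess growth rate from non-degeneracy. All the individual steps check out — the cancellation of the martingale parts is exact, the variance representation of $\gamma_t^{*\pi}$ uses the non-negativity of the weights, and the bound $\gamma_t^{*\pi}\geq\tfrac{c\epsilon}{2}$ via $x=e_k-e_l$ is a valid (slightly different but equivalent in spirit) substitute for Fernholz's $\gamma_t^{*\pi}\geq\tfrac{\epsilon}{2}(1-\max_i\pi^i)$ lemma.
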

\begin{proof}
See \cite[Proposition 2.1.9]{SPT}.
\end{proof}

There is then a whole playground to "generate" such portfolios. Part of the theory deals with finding a \textbf{generating function} $F$ on $\{x \in \mathbb{R}_{>0}^n: \sum_{i=1}^n x_i = 1\}$ for a portfolio $\pi$ such that $$\log \frac{Z\indices{_t^\pi}}{Z\indices{_t^\mu}} = \log F(\mu_t) + \Theta_t$$ for an adapted process $\Theta$ of bounded variation. There are many connections to information theory \cite[chapter 15]{InfoTheo} with the entropy being the toy model of a generating function.

\section{Geometric analysis of the portfolio return}
In this section Geometric Arbitrage Theory \cite{GIGA} shall be combined with Stochastic Portfolio Theory \cite{SPT}. Therefore, the number of Brownian motions is obviously set less than the number of securities $m < n$. In the spirit of Geometric Arbitrage Theory investigate the gauge behavior of generated portfolios and, more generally, of relative portfolios analogous to discounting by different securities:

\begin{propo}[Gauge transformation of relative portfolios]\label{APropo}
Let $\pi$ and $\rho$ be portfolios with $\inf_t |Z\indices{_t^\rho}| > 0$. The coefficients of the decomposition \ref{AlphaSplit} corresponding to the processes $$\hat{Z}\indices{_\rho^\pi} := \frac{Z^\pi}{Z^\rho}$$ equal $$\hat{\alpha}_t = 0, \quad \hat{\sigma}_{t,a} = 0, \quad \hat{\beta}\indices{_t^a} = \beta\indices{_t^a} - \rho_{t,j} \sigma\indices{_t^{j a}},$$ and those for the logarithmic processes $$\tilde{Z}\indices{_\rho^\pi} := \log \hat{Z}\indices{_\rho^\pi}$$ equal $$\tilde{\alpha}_t = 0, \quad \tilde{\sigma}_{t,a} = 0, \quad \tilde{\beta}\indices{_t^a} = \beta\indices{_t^a} - \frac{\pi_{t,j} + \rho_{t,j}}{2} \sigma\indices{_t^{j a}},$$ while $\alpha_{t,A}$ and $J\indices{_t^{iA}}$ remain unchanged.
\end{propo}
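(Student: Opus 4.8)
The plan is to treat both portfolio values as It\^o processes in their own right and then reduce everything to the linear algebra of Definition \ref{AlphaSplit}. First I would read off from Definition \ref{Portfolio} that $Z^\pi$ and $Z^\rho$ are themselves geometric Brownian motions: dividing the defining SDE by the portfolio value gives $dZ^\pi_t/Z^\pi_t = \pi_{t,i}\alpha^i_t\,dt + \pi_{t,i}\sigma^i_{t,a}\,dW^a_t$, so the portfolio drift is $\alpha^\pi_t := \pi_{t,i}\alpha^i_t$ and the portfolio volatility is $\sigma^\pi_{t,a} := \pi_{t,i}\sigma^i_{t,a}$, and likewise for $\rho$. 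The portfolio constraint $\sum_i\pi^i_t = \sum_i\rho^i_t = 1$ will be used repeatedly.

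Next I would compute the dynamics of the two relative processes by It\^o's formula. For the quotient $\hat Z = Z^\pi/Z^\rho$ the quotient rule for geometric Brownian motions yields volatility $\sigma^\pi_{t,a} - \sigma^\rho_{t,a}$ and drift $\alpha^\pi_t - \alpha^\rho_t - (\sigma^\pi_{t,a}-\sigma^\rho_{t,a})\sigma^{\rho a}_t$, the last term being the It\^o correction from the numeraire. For the logarithm $\tilde Z = \log\hat Z$ the diffusion coefficient is unchanged, $\sigma^\pi_{t,a} - \sigma^\rho_{t,a}$, while the drift becomes $\alpha^\pi_t - \alpha^\rho_t - \tfrac12(\sigma^\pi_{t,a}+\sigma^\rho_{t,a})(\sigma^{\pi a}_t - \sigma^{\rho a}_t)$. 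The crucial observation is that in both cases the volatility of the relative process is the \emph{difference} of the two portfolio volatilities, so adding the common numeraire volatility $\sigma^\rho$ to every security leaves it invariant; this is precisely why the deviation $\hat\sigma$, and hence the frame $J$ and the arbitrage coefficient $\alpha_{t,A}$, will not move under the gauge transformation.

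Then I would substitute the decomposition $\alpha^i_t = \alpha_t + \beta^a_t\hat\sigma^i_{t,a} + \alpha_{t,A}J^{iA}_t$ from Definition \ref{AlphaSplit} into $\alpha^\pi_t - \alpha^\rho_t$. Because $\sum_i\pi^i_t = \sum_i\rho^i_t = 1$, the average drift $\alpha_t$ cancels and the deviation terms recombine, using $(\pi_{t,i}-\rho_{t,i})\hat\sigma^i_{t,a} = \sigma^\pi_{t,a}-\sigma^\rho_{t,a}$, into $\beta^a_t(\sigma^\pi_{t,a}-\sigma^\rho_{t,a})$ together with the arbitrage term $\alpha_{t,A}(\pi_{t,i}-\rho_{t,i})J^{iA}_t$. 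Collecting this against the It\^o corrections computed above, the drift of $\hat Z$ takes the form $(\beta^a_t - \sigma^{\rho a}_t)(\sigma^\pi_{t,a}-\sigma^\rho_{t,a}) + \alpha_{t,A}(\pi_{t,i}-\rho_{t,i})J^{iA}_t$, from which one reads off $\hat\beta^a_t = \beta^a_t - \rho_{t,j}\sigma^{ja}_t$; the analogous grouping for $\tilde Z$ replaces $\sigma^{\rho a}_t$ by $\tfrac12(\sigma^{\pi a}_t+\sigma^{\rho a}_t) = \tfrac{\pi_{t,j}+\rho_{t,j}}{2}\sigma^{ja}_t$, giving $\tilde\beta^a_t$. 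Since the relative process is measured against the numeraire $\rho$, it carries no residual market-average component, which accounts for $\hat\alpha_t = \hat\sigma_{t,a} = 0$ (and likewise $\tilde\alpha_t = \tilde\sigma_{t,a}=0$ for the logarithmic process), while $\alpha_{t,A}$ and $J^{iA}_t$ survive untouched.

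The routine part is the two It\^o computations; the step that needs care is the conceptual one of applying a decomposition originally stated for a market of $n+1$ securities to a single relative process. I expect the main obstacle to be verifying that the arbitrage part genuinely decouples: one must check that the It\^o correction terms land entirely in the regression coefficient $\beta$, shifting it by a multiple of the relative volatility, and do not contaminate the component along the frame $J$, so that $\alpha_{t,A}$ is left invariant and the orthogonality relations $J^A_{t,\mu}\hat\sigma^\mu_{t,a}=0$ remain consistent after the gauge change.
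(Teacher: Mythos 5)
Your proposal is correct and follows essentially the same route as the paper: an It\^o quotient (and logarithm) computation, substitution of the decomposition from Definition \ref{AlphaSplit}, cancellation of the means via $\sum_i(\pi_{t,i}-\rho_{t,i})=0$, and the observation that the It\^o correction $-(\sigma^{\pi}_{t,a}-\sigma^{\rho}_{t,a})\sigma^{\rho a}_t=-(\pi_{t,i}-\rho_{t,i})\hat{\sigma}^i_{t,a}\,\rho_{t,j}\sigma^{ja}_t$ lies entirely along the $\hat\sigma$ directions and hence only shifts $\beta$, leaving $\alpha_{t,A}$ and $J$ untouched. The only cosmetic difference is that you first aggregate each portfolio into a scalar geometric Brownian motion with drift $\alpha^\pi_t$ and volatility $\sigma^\pi_{t,a}$ before applying the quotient rule, whereas the paper carries the sums over securities through the computation directly.
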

\begin{proof}
The ratio can be computed via the It\=o formula:
\begin{align*}
d\left(\hat{Z}\indices{_\rho^\pi}\right)_t &= \frac{dZ\indices{_t^\pi}}{Z\indices{_t^\rho}} -\frac{Z\indices{_t^\pi}}{(Z\indices{_t^\rho})^2} dZ\indices{_t^\rho} + 2 \cdot \frac{1}{2} \frac{Z\indices{_t^\pi}}{(Z\indices{_t^\rho})^3} (dZ\indices{_t^\rho})^2 - \frac{1}{(Z\indices{_t^\rho})^2} dZ\indices{_t^\pi} dZ\indices{_t^\rho} \\
&= \frac{Z\indices{_t^\pi}}{Z\indices{_t^\rho}} \pi_{t,i} \left(\alpha\indices{_t^i} dt + \sigma\indices{_t^i_a} dW\indices{_t^a}\right) - \frac{Z\indices{_t^\pi}}{(Z\indices{_t^\rho})^2} Z\indices{_t^\rho} \rho_{t,i} \left(\alpha\indices{_t^i} dt + \sigma\indices{_t^i_a} dW\indices{_t^a}\right) \\
&\quad + \frac{Z\indices{_t^\pi}}{(Z\indices{_t^\rho})^3} (Z\indices{_t^\rho})^2 \rho_{t,i} \rho_{t,j} \sigma\indices{_t^i_a} \sigma\indices{_t^{j a}} dt - \frac{1}{(Z\indices{_t^\rho})^2} Z\indices{_t^\pi} Z\indices{_t^\rho} \pi_{t,i} \rho_{t,j} \sigma\indices{_t^i_a} \sigma\indices{_t^{j a}} dt \\
&= \frac{Z\indices{_t^\pi}}{Z\indices{_t^\rho}} \left((\pi_{t,i} - \rho_{t,i}) \left(\alpha\indices{_t^i} dt + \sigma\indices{_t^i_a} dW\indices{_t^a}\right) + (\rho_{t,i} - \pi_{t,i}) \rho_{t,j} \sigma\indices{_t^i_a} \sigma\indices{_t^{j a}} dt\right) \\
&= \frac{Z\indices{_t^\pi}}{Z\indices{_t^\rho}} (\pi_{t,i} - \rho_{t,i}) \left((\alpha_t + \beta\indices{_t^a} \hat{\sigma}\indices{_t^i_a} + \alpha_{t,A} J\indices{_t^{iA}} - \rho_{t,j} \sigma\indices{_t^i_a} \sigma\indices{_t^{j a}}) dt + \sigma\indices{_t^i_a} dW\indices{_t^a}\right) \\
&= \frac{Z\indices{_t^\pi}}{Z\indices{_t^\rho}} (\pi_{t,i} - \rho_{t,i}) \left((0 + (\beta\indices{_t^a} - \rho_{t,j} \sigma\indices{_t^{j a}}) \hat{\sigma}\indices{_t^i_a} + \alpha_{t,A} J\indices{_t^{iA}} - 0) dt + \sigma\indices{_t^i_a} dW\indices{_t^a}\right).
\end{align*}
Note that in the last line the mean values $\alpha_t$ and $\sigma_{t,a}$ drop out because they have an independent factor $\sum_{i=1}^n (\pi_{t,i} - \rho_{t,i}) = 0$.

Furthermore, the logarithmic relative wealth fulfills
\begin{align*}
d\left(\log \hat{Z}\indices{_\rho^\pi}\right)_t &= \frac{1}{\hat{Z}\indices{_{t,\rho}^\pi}} d\hat{Z}\indices{_{t,\rho}^\pi} -\frac{1}{2(\hat{Z}\indices{_{t,\rho}^\pi})^2} (d\hat{Z}\indices{_{t,\rho}^\pi})^2 \\
&= (\pi_{t,i} - \rho_{t,i}) \left(((\beta\indices{_t^a} - \rho_{t,j} \sigma\indices{_t^{j a}}) \hat{\sigma}\indices{_t^i_a} + \alpha_{t,A} J\indices{_t^{iA}}) dt + \sigma\indices{_t^i_a} dW\indices{_t^a}\right) - \frac{1}{2} (\pi_{t,i} - \rho_{t,i}) (\pi_{t,j} - \rho_{t,j}) \sigma\indices{_t^i_a} \sigma\indices{_t^{j a}} dt \\
&= (\pi_{t,i} - \rho_{t,i}) \left(\left(\left(\beta\indices{_t^a} - \frac{\pi_{t,j} + \rho_{t,j}}{2} \sigma\indices{_t^{j a}}\right) \hat{\sigma}\indices{_t^i_a} + \alpha_{t,A} J\indices{_t^{iA}}\right) dt + \sigma\indices{_t^i_a} dW\indices{_t^a}\right).
\end{align*}
\end{proof}

This means that the relative wealths always already gauge away the mean drift $\alpha_t$ and volatility $\sigma_{t,a}$. This is quite clear because portfolios are only compared based on their specific characteristics and not on what the market does to every portfolio. The excess volatilities are scaled according to the portfolio differences $\pi_{t,i} - \rho_{t,i}$ and the sensitivities $\beta\indices{_t^a}$ to that excess volatility are reduced by the benchmark portfolio's weighted volatilities $\rho_{t,j} \sigma\indices{_t^{j a}}$. For the logarithmic relative wealth qualitatively the same happens. Still the part that cannot be gauged away, and is already in many cases the major drift part of the relative value, is the term $\alpha_{t,A} J\indices{_t^{iA}}$, which in the short-term analysis \cite{GIGA} is shown to be a driver of arbitrage. So in the very short term it is predominant because the other contributions are low on this scale and in the very long term they cancel out in the portfolio so that arbitrage exists again. One can make that more explicit by the following

\begin{coro}[Long-term growth rate of relative portfolios]
For portfolios $\pi$ and $\rho$ with $\inf_t |Z\indices{_t^\rho}| > 0$ the following applies: $$\lim_{T\to\infty} \frac{1}{T} \left(\log \frac{Z\indices{_T^\pi}}{Z\indices{_T^\rho}} - \int_0^T (\pi_{t,i} - \rho_{t,i}) \left(\left(\beta\indices{_t^a} - \frac{\pi_{t,j} + \rho_{t,j}}{2} \sigma\indices{_t^{j a}}\right) \hat{\sigma}\indices{_t^i_a} + \alpha_{t,A} J\indices{_t^{iA}}\right) dt\right) = 0 \quad \text{a.~s.}$$
\end{coro}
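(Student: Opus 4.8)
The plan is to read off the dynamics of the logarithmic relative wealth $\tilde{Z}\indices{_\rho^\pi} = \log(Z^\pi/Z^\rho)$ directly from the last display in the proof of Proposition \ref{APropo} and simply integrate it over $[0,T]$. Abbreviating the drift coefficient by $D_t := (\pi_{t,i} - \rho_{t,i})\left(\left(\beta\indices{_t^a} - \frac{\pi_{t,j} + \rho_{t,j}}{2}\sigma\indices{_t^{j a}}\right)\hat{\sigma}\indices{_t^i_a} + \alpha_{t,A} J\indices{_t^{iA}}\right)$ and collecting the diffusion part into the continuous local martingale $M_t := \int_0^t (\pi_{s,i} - \rho_{s,i})\sigma\indices{_s^i_a}\, dW\indices{_s^a}$, integration gives
$$\log \frac{Z\indices{_T^\pi}}{Z\indices{_T^\rho}} = \log \frac{Z\indices{_0^\pi}}{Z\indices{_0^\rho}} + \int_0^T D_t\, dt + M_T.$$
The key observation is that the corollary subtracts off the \emph{entire} drift integral, so no convergence of the drift itself is needed; after subtraction the quantity inside the limit collapses to $\frac{1}{T}\log\frac{Z\indices{_0^\pi}}{Z\indices{_0^\rho}} + \frac{M_T}{T}$, and the claim reduces to showing that both terms vanish almost surely.

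The first term is a fixed finite constant over $T$ (finite since $\inf_t|Z\indices{_t^\rho}| > 0$ and the portfolio value started positive stays positive), so it trivially tends to $0$. The whole substance of the statement is therefore the strong law of large numbers $M_T/T \to 0$ for the martingale $M$. First I would record its quadratic variation $\langle M\rangle_T = \int_0^T (\pi_{t,i} - \rho_{t,i})(\pi_{t,j} - \rho_{t,j})\sigma\indices{_t^i_a}\sigma\indices{_t^{j a}}\, dt$ and then split into the two exhaustive cases governed by $\langle M\rangle_\infty$. On $\{\langle M\rangle_\infty < \infty\}$ the martingale $M_T$ converges almost surely to a finite limit, whence $M_T/T \to 0$ is immediate.

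On $\{\langle M\rangle_\infty = \infty\}$ I would invoke the strong law of large numbers for continuous local martingales (equivalently, the Dambis--Dubins--Schwarz time change $M_T = B_{\langle M\rangle_T}$ combined with the law of the iterated logarithm for the Brownian motion $B$), which yields $M_T/\langle M\rangle_T \to 0$ a.s. Factoring $M_T/T = (M_T/\langle M\rangle_T)\,(\langle M\rangle_T/T)$, the conclusion then follows as soon as $\langle M\rangle_T/T$ remains bounded. This boundedness is exactly where I expect the main obstacle to sit: it requires the integrand $(\pi_{t,i} - \rho_{t,i})(\pi_{t,j} - \rho_{t,j})\sigma\indices{_t^i_a}\sigma\indices{_t^{j a}}$ to be bounded, giving $\langle M\rangle_T \leq CT$ and hence $\langle M\rangle_T/T \leq C$. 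Boundedness of the portfolio weights is guaranteed by Definition \ref{Portfolio}, but an upper bound on the covariance $\sigma$ is not part of the bare hypotheses; I would therefore make explicit a boundedness assumption on $\sigma$ (or import it from the non-degenerate-market setting of Section 2), and this is the one ingredient that must be stated carefully for the argument to close.
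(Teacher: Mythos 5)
Your proposal is correct and takes essentially the same route as the paper: integrate the logarithmic relative-wealth dynamics from Proposition \ref{APropo}, observe that the initial value vanishes under the $1/T$ average, and dispose of the martingale part via the law of the iterated logarithm. You are in fact more careful than the paper's three-line proof, which merely asserts that ``the Brownian part scales like $\sqrt{T}$'' without the quadratic-variation bound or the implicit boundedness assumption on $\sigma$ that you rightly flag as the one hypothesis needed to close the argument.
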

\begin{proof}
The terms are taken from proposition \ref{APropo}. The initial values obviously drop out in the long-term average. The Brownian part scales like $\sqrt{T}$ (compare \cite{IterLoga}) and therefore also drops out.
\end{proof}

If the volatilities are assumed to converge to each other in the long run, i.~e. $\hat{\sigma}\indices{_t^i_a} \to 0$, indeed only the geometric arbitrage part remains. Let the portfolio holdings and the geometric abitrage measures be approximately constant over time, this would mean for the long-term average of the relative growth rate $$\frac{1}{T} \log \frac{Z\indices{_T^\pi}}{Z\indices{_T^\rho}} \sim \frac{1}{T} \left((\pi_{T,i} - \rho_{T,i}) \alpha_{T,A} J\indices{_T^{iA}} T\right) = (\pi_{T,i} - \rho_{T,i}) \alpha_{T,A} J\indices{_T^{iA}}.$$ So the arbitrage scales with the difference in the portfolio holdings.

\begin{bem}
One can also consider the reverse situation. Assume that in the long run the geometric arbitrage term vanishes. Look at arbitrage similar as defined above within Stochastic Portfolio Theory. If the dynamics fulfills $$\beta\indices{_t^a} - \rho_{t,j} \sigma\indices{_t^{ja}} \begin{cases} > \sqrt{2t\log\log t} & \text{ if } \pi_{t,i} > \rho_{t,i}, \\ < -\sqrt{2t\log\log t} & \text{ if } \pi_{t,i} < \rho_{t,i}, \end{cases}$$ then one receives $$\underset{T\to\infty}{\lim\inf} \frac{1}{\sqrt{2T\log\log T}} \left(\frac{Z\indices{_T^\pi}}{Z\indices{_T^\rho}} - \frac{Z\indices{_0^\pi}}{Z\indices{_0^\rho}}\right) > 0 \quad \text{a.~s.}$$ and therefore $\frac{Z\indices{_T^\pi}}{Z\indices{_0^\pi}} > \frac{Z\indices{_T^\rho}}{Z\indices{_0^\rho}}$ with high probability over a sufficiently long period $T$. The numerical factor is chosen according to the law of iterated logarithm (see e.~g. \cite[Theorem 5.1]{IterLoga}). However, this is a quite strict range if one wanted to build a portfolio $\pi$ this way. It is very unlikely that the coefficients above fulfill this for larger times in either direction. This, in return, means that there will always be a probability for the portfolio $\pi$ to underperform $\rho$ over some horizon $\hat{T}$ even if there has been strong outperformance so far.
\end{bem}

\section{Malaney-Weinstein connection for portfolios}
Now have a closer look at the wealth SDE of Stochastic Portfolio Theory (definition \ref{Portfolio}). Its form immediately says that the portfolio value again follows a geometric Brownian motion. The analysis for the stock prices in \cite{GIGA} therefore carries over directly to the portfolio level. In particular, the portfolio value can be changed by the portfolio weights. As seen in proposition \ref{APropo}, one can also gauge away some coefficients, and there is the additional scaling freedom. This requires to consider \emph{relative wealths} and this is what is actually done in Stochastic Portfolio Theory \cite{SPT}. Then that gauge procedure is actually more general than the one for only the stocks since this can be recovered as the special case for $\pi\indices{_t^i} = \delta_k^i$ for some index $k$.

One might wonder how a covariant derivative and therefore the gauge connection might look like in analogy to the Malaney-Weinstein connection \ref{MWconnection}. The simple differential of the portfolio value is already given. So define the

\begin{defi}[Portfolio-valued Malaney-Weinstein connection]
Let $(\pi_t)_{t\in[0,\infty)}$ be a portfolio. The \textbf{portfolio-valued Malaney-Weinstein connection} is given by $$A\indices{_t^\pi} := \sum_{i=1}^n \pi\indices{_t^i} \frac{dS\indices{_t^i}}{S\indices{_t^i}}$$ and its curvature by $$R\indices{_t^\pi} := dA\indices{_t^\pi}$$ for all $t\in[0,\infty)$.
\end{defi}

This choice was taken so that the parallel transport equation $$DZ\indices{_t^\pi} := (d-A\indices{_t^\pi}) Z\indices{_t^\pi} = 0$$ holds. The difference to the original Malaney-Weinstein connection \ref{MWconnection} is that there the denominator was not given by the stock, but by the value process itself. On the one hand, this originates from still the sum and, on the other hand, from the absence of a numeraire here and the predictability of the trading strategy there. Explicitly, its components read $$A\indices{_{t,0}^\pi} := A\indices{_t^\pi}(\partial_{\pi_t}) = 0, \quad A\indices{_{t,1}^\pi} := A\indices{_t^\pi}(\partial_{S_t}) = \frac{\pi_t}{S_t}.$$ It transforms affinely under a change of numeraire resp. discounting at all:
\begin{align*}
\hat{A}\indices{_t^\pi} &= \sum_{i=1}^n \pi\indices{_t^i} \frac{d\hat{S}\indices{_t^i}}{\hat{S}\indices{_t^i}} \\
&= \sum_{i=1}^n \pi\indices{_t^i} \frac{d\left(\frac{S\indices{_t^i}}{S\indices{_t^0}}\right)}{\frac{S\indices{_t^i}}{S\indices{_t^0}}} \\
&= \sum_{i=1}^n \pi\indices{_t^i} \frac{dS\indices{_t^i}}{S\indices{_t^i}} - \sum_{i=1}^n \pi\indices{_t^i} \frac{dS\indices{_t^0}}{S\indices{_t^0}} \\
&=: A\indices{_t^\pi} + \Lambda_t.
\end{align*}
However, the affine term $\Lambda_t$ is not an exact form any more unless the portfolio is kept constant.

The curvature amounts to $$R\indices{_t^\pi} = \sum_{i=1}^n \frac{1}{S\indices{_t^i}} d\pi\indices{_t^i} \land dS\indices{_t^i}.$$ For comparison, its components read
\begin{align*}
R\indices{_{t,00}^\pi} &:= R\indices{_t^\pi}(\partial_{\pi_t}, \partial_{\pi_t}) = 0, &R\indices{_{t,01}^\pi} &:= R\indices{_t^\pi}(\partial_{\pi_t}, \partial_{S_t}) = \frac{1}{2S_t} \\
R\indices{_{t,10}^\pi} &:= R\indices{_t^\pi}(\partial_{S_t}, \partial_{\pi_t}) = -\frac{1}{2S_t}, &R\indices{_{t,11}^\pi} &:= R\indices{_t^\pi}(\partial_{S_t}, \partial_{S_t}) = 0.
\end{align*}
Under the stochastic integral such a differential form can be transformed into the space only spanned by time and the portfolios \cite{Zambrini}, so the expected curvature can be rewritten $$\mathbb{E}\left[R\indices{_t^\pi}\right] = \mathbb{E}\left[\sum_{i=1}^n \frac{\mathcal{D} S\indices{_t^i}}{S\indices{_t^i}} d\pi\indices{_t^i} \land dt\right] = \mathbb{E}\left[\sum_{i=1}^n \mathcal{D} \log S\indices{_t^i} d\pi\indices{_t^i} \land dt\right] =: \sum_{i=1}^n R\indices{_{t,i}^\pi} d\pi\indices{_t^i} \land dt.$$ There are two possibilities for this quantity to vanish: The portfolio weights $\pi\indices{_t^i}$ are kept constant in time or the logarithmic stock prices $\log S\indices{_t^i}$ are expected to stay constant. The latter case applies e.~g. for log-normal price processes with mean coefficient 0. In a perfect market this parameter is supposed to be given by the interest rate. In this model without numeraire it can very well be ignored. So $R\indices{_{t,i}^\pi} = 0$ constitutes an equilibrium condition. This is in line with other notions of market curvature in those gauge theoretic arbitrage models where it is always constructed as a (no-)arbitrage condition.

As a next step, turn the stochastic differentials intrinsically into ones in the Nelson sense. With the same assumptions as in theorem \ref{Theorem3.1}, the stochastic version of the above connection can be rewritten and has an even simpler expression than in theorem \ref{Theorem3.1}; the stock price does not appear any more at all. It now propagates the \emph{portfolio} instead of the \emph{wealth} process:

\begin{satz}[Portfolio-based stochastic parallel transport of wealth]\label{PortStockMW}
Let $(\pi_t)_{t\in[0,T]}$ be a portfolio process and let the other quantities from theorem \ref{Theorem3.1} be given. Then for all $t\in[0,T]$ the \textbf{stochastic portfolio-valued Malaney-Weinstein connection} $$\Gamma\indices{_t^\pi} := \sum_{i=1}^n \frac{\pi_{t,i} \mathcal{D}_{\mathbb{P}^*} S\indices{_t^i}}{S\indices{_t^i}}$$ satisfies $$\Gamma\indices{_t^\pi} = \left(\alpha_t^* + \sum_{i=1}^n \pi_{t,i} \alpha_{t,A} J\indices{_t^{i A}}\right) dt$$ and the portfolio process obeys $$Z\indices{_t^\pi} = \mathbb{E}_{\mathbb{P}^*}\left[Z\indices{_T^\pi} e^{-\int_t^T (\Gamma\indices{_s^\pi} + \sigma\indices{_t^i_a} \pi_{t,i} dW\indices{_s^a} + \beta_{s,a} dW_s^{*a})} \mid F_t\right] = \mathcal{E}_{\mathbb{P},\mathbb{P}^*}\left[Z\indices{_T^\pi} e^{-\int_t^T \Gamma\indices{_s^\pi}} \mid F_t\right].$$
\end{satz}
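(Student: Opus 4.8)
The plan is to treat the statement in three stages: first identify the connection $\Gamma\indices{_t^\pi}$ by computing the Nelson derivative, then solve the geometric portfolio SDE explicitly, and finally verify the two value-process representations, the second of which is the delicate one.

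First I would rewrite each stock under $\mathbb{P}^*$. Substituting $dW\indices{_t^a} = dW_t^{*a} - \beta\indices{_t^a}dt$ into $dS\indices{_t^i} = S\indices{_t^i}(\alpha\indices{_t^i}dt + \sigma\indices{_t^i_a}dW\indices{_t^a})$ shows that the $\mathbb{P}^*$-drift of $S\indices{_t^i}$ is $S\indices{_t^i}(\alpha\indices{_t^i} - \sigma\indices{_t^i_a}\beta\indices{_t^a})$. Since the Nelson derivative $\mathcal{D}_{\mathbb{P}^*}$ conditions on $F_t$ and divides the increment by $h$, the martingale part contributes nothing and one is left with $\mathcal{D}_{\mathbb{P}^*}S\indices{_t^i}/S\indices{_t^i} = \alpha\indices{_t^i} - \sigma\indices{_t^i_a}\beta\indices{_t^a}$. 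Now I insert the regression decomposition of Definition \ref{AlphaSplit}, $\alpha\indices{_t^i} = \alpha_t + \beta\indices{_t^a}\hat{\sigma}\indices{_t^i_a} + \alpha_{t,A}J\indices{_t^{iA}}$, together with $\hat{\sigma}\indices{_t^i_a} = \sigma\indices{_t^i_a} - \sigma_{t,a}$; the two $\beta$-terms collapse to $-\beta\indices{_t^a}\sigma_{t,a}$, so that $\mathcal{D}_{\mathbb{P}^*}S\indices{_t^i}/S\indices{_t^i} = \alpha_t^* + \alpha_{t,A}J\indices{_t^{iA}}$. Multiplying by $\pi_{t,i}$, summing, and using the portfolio constraint $\sum_i\pi_{t,i}=1$ to pull $\alpha_t^*$ out as a constant yields the claimed $\Gamma\indices{_t^\pi}$. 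The second stage is routine: because $dZ\indices{_t^\pi}/Z\indices{_t^\pi} = \sum_i\pi_{t,i}(\alpha\indices{_t^i}dt + \sigma\indices{_t^i_a}dW\indices{_t^a})$ is again a geometric Brownian motion, Itô's formula on $\log Z\indices{_t^\pi}$ gives a closed expression for $Z\indices{_T^\pi}$ in terms of $Z\indices{_t^\pi}$, the drift $\sum_i\pi_{t,i}\alpha\indices{_t^i} - \frac{1}{2}(\sum_i\pi_{t,i}\sigma\indices{_t^i_a})(\sum_j\pi_{t,j}\sigma\indices{_t^{ja}})$, and the martingale $\int_t^T\pi_{s,i}\sigma\indices{_s^i_a}dW\indices{_s^a}$; keeping the portfolio volatility as a single abbreviation will make the later cancellations transparent.

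For the first representation I would substitute this explicit $Z\indices{_T^\pi}$ into $\mathbb{E}_{\mathbb{P}^*}[Z\indices{_T^\pi}e^{-\int_t^T(\Gamma\indices{_s^\pi} + \sigma\indices{_s^i_a}\pi_{s,i}dW\indices{_s^a} + \beta_{s,a}dW_s^{*a})}\mid F_t]$ (reading the integrand consistently at time $s$). The term $-\int\pi_{s,i}\sigma\indices{_s^i_a}dW\indices{_s^a}$ in the exponent is engineered to cancel exactly the stochastic integral coming from $\log Z\indices{_T^\pi}$; after converting the remaining $dW\indices{_s^a}$ into $dW_s^{*a}$ and inserting $\Gamma\indices{_s^\pi}$ from the first stage, the surviving factor should reassemble into the Radon–Nikodym exponential $d\mathbb{P}/d\mathbb{P}^*$ on $[t,T]$, whereupon the conditional $\mathbb{P}^*$-expectation collapses to $Z\indices{_t^\pi}$ times the $F_t$-conditional mean of a stochastic exponential, equal to $1$ once the Novikov condition upgrades it from a local to a true martingale.

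The second representation, and the equality of the two, is where the real work lies. Here I would appeal to the defining property of the normal expectation $\mathcal{E}_{\mathbb{P},\mathbb{P}^*}$: conditioning on $W\indices{_s^a}=W_s^{*a}=0$ pins the Brownian paths, so the two stochastic-integral terms $\int\pi\sigma\,dW$ and $\int\beta\,dW^*$ present in the first representation no longer contribute, and the exponent reduces to $\int_t^T\Gamma\indices{_s^\pi}$ while $\Gamma\indices{_s^\pi}$, being purely of bounded variation, survives untouched. The main obstacle is making this precise: the conditioning event has probability zero, so it must be read as a pinned-bridge limit in the sense of \cite{Nelson,GIGA}, and one has to verify that in this limit the martingale contributions genuinely vanish while the drift is preserved, consistently with the Girsanov density used in the previous step and with the identical mechanism behind Theorem \ref{Theorem3.1}. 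This reconciliation of the explicit $\mathbb{E}_{\mathbb{P}^*}$-form with the normal-expectation form is the step I expect to require the most care.
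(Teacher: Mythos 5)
Your first stage is correct and coincides with the paper's: passing to the shifted Brownian motions $W_t^{*a}$, the Nelson derivative kills the martingale part, and the regression decomposition collapses the drift to $\alpha_t^* + \alpha_{t,A}J\indices{_t^{iA}}$, giving the stated $\Gamma\indices{_t^\pi}$. Your route to the value-process representation differs from the paper's only in packaging: the paper introduces an auxiliary deflator $\Lambda$ with $d\Lambda_t = \Lambda_t\left(\left(-a_t + b\indices{_t^a}\beta_{t,a}\right)dt - \beta_{t,a}dW\indices{_t^a}\right)$, shows that $\Lambda Z^\pi$ is a local martingale, and then unwinds $\log\Lambda$, whereas you expand $\log Z^\pi$ directly. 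In content these are the same computation.

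The gap is in your third stage. Carry out the cancellation you describe, writing $a_s := \sum_i\pi_{s,i}\alpha\indices{_s^i}$ and $b\indices{_s^a} := \sum_i\pi_{s,i}\sigma\indices{_s^{ia}}$. Itô gives $\log Z\indices{_T^\pi} - \log Z\indices{_t^\pi} = \int_t^T\left(a_s - \tfrac{1}{2}b_{s,a}b\indices{_s^a}\right)ds + \int_t^T b_{s,a}dW\indices{_s^a}$, and the first stage gives $\Gamma\indices{_s^\pi} = \left(a_s - \beta_{s,a}b\indices{_s^a}\right)ds$. After the $\int b\,dW$ terms cancel as you intend, the exponent remaining inside $\mathbb{E}_{\mathbb{P}^*}[\,\cdot\mid F_t]$ is $\int_t^T\left(\beta_{s,a}b\indices{_s^a} - \tfrac{1}{2}b_{s,a}b\indices{_s^a}\right)ds - \int_t^T\beta_{s,a}dW_s^{*a}$, whose conditional $\mathbb{P}^*$-expectation is $\exp\left(\int_t^T\left(\tfrac{1}{2}\beta_{s,a}\beta\indices{_s^a} + \beta_{s,a}b\indices{_s^a} - \tfrac{1}{2}b_{s,a}b\indices{_s^a}\right)ds\right)$ rather than $1$ (this is nonzero even when $b = \beta$). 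So the surviving factor does \emph{not} reassemble into the Radon--Nikod\'ym exponential times a mean-one stochastic exponential, and the first representation cannot be verified by the computation you outline. The paper only arrives at the stated formula because it integrates $d(\Lambda Z^\pi)_t = \Lambda_t Z\indices{_t^\pi}(b_{t,a}-\beta_{t,a})dW\indices{_t^a}$ to $\log\left(\Lambda_T Z\indices{_T^\pi}/\Lambda_t Z\indices{_t^\pi}\right) = \int_t^T(b_{s,a}-\beta_{s,a})dW\indices{_s^a}$, i.e., drops the Itô correction $-\tfrac{1}{2}\int_t^T(b_{s,a}-\beta_{s,a})(b\indices{_s^a}-\beta\indices{_s^a})ds$, with a further slip in the final change of measure. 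Your more careful expansion exposes exactly these missing quadratic-variation terms; to close the argument you must either carry them (and conclude the representation holds only up to the correction factor above) or state explicitly the convention under which they are discarded. Your fourth stage is the same informal appeal to the normal expectation as the paper's and inherits the probability-zero conditioning issue, which you at least flag.
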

\begin{proof}
The correct Brownian motions under $\mathbb{P}^*$ are $W_t^{*a}$and the securities' dynamics in terms of $W_t^{*a}$ read $$dS\indices{_t^i} = S\indices{_t^i} \left(\left(\alpha_t^* + \alpha_{t,A} J\indices{_t^{i A}}\right) dt + \sigma\indices{_t^i_a} dW_t^{*a}\right).$$ Now the Nelson derivative works like "dividing the stochastic differentials: $\mathcal{D} \approx \frac{d}{dt}$" and the special definition here assures that one has to throw away Brownian motions with respect to $\mathbb{P}^*$. Therefore, the stochastic portfolio-valued Malaney-Weinstein connection amounts to
\begin{align*}
\Gamma\indices{_t^\pi} &:= \sum_{i=1}^n \frac{\pi_{t,i} \mathcal{D}_{\mathbb{P}^*} S\indices{_t^i}}{S\indices{_t^i}} \\
&= \sum_{i=1}^n \frac{\pi_{t,i} S\indices{_t^i}}{S\indices{_t^i}} \left(\alpha_t^* + \alpha_{t,A} J\indices{_t^{i A}}\right) dt \\
&= \left(\alpha_t^* + \sum_{i=1}^n \pi_{t,i} \alpha_{t,A} J\indices{_t^{i A}}\right) dt.
\end{align*}

Next, define the coefficients $$a_t := \sum_{i=1}^n \alpha\indices{_t^i} \pi_{t,i}, \quad b\indices{_t^a} := \sum_{i=1}^n \sigma\indices{_t^{ia}} \pi_{t,i},$$ which are nothing else than the portfolio-weighted mean return and mean volatility. Take another asset $\Lambda$ with dynamics $$d\Lambda_t = \Lambda_t \left(\left(-a_t + b\indices{_t^a} \beta_{t,a}\right) dt - \beta_{t,a} dW\indices{_t^a}\right).$$ The product process with the portfolio obeys
\begin{align*}
d(\Lambda Z^\pi)_t &= Z\indices{_t^\pi} d\Lambda_t + \Lambda_t dZ\indices{_t^\pi} + \frac{1}{2} d\Lambda_t dZ\indices{_t^\pi} \\
&= Z\indices{_t^\pi} \Lambda_t \left(\left(-a_t + b\indices{_t^a} \beta_{t,a}\right) dt - \beta_{t,a} dW\indices{_t^a}\right) + \Lambda_t Z\indices{_t^\pi} \left(a_t dt + b_{t,a} dW\indices{_t^a}\right) - \frac{1}{2} \Lambda_t Z\indices{_t^\pi} \beta_{t,a} b\indices{_t^a} dt \\
&= \Lambda_t Z\indices{_t^\pi} (b_{t,a} - \beta_{t,a}) dW\indices{_t^a}.
\end{align*}
This gives $$\log\left(\frac{\Lambda_T Z\indices{_T^\pi}}{\Lambda_t Z\indices{_t^\pi}}\right) = \int_t^T (b_{s,a} - \beta_{s,a}) dW\indices{_s^a}$$ or $$Z\indices{_t^\pi} = Z\indices{_T^\pi} \exp\left(\log\left(\frac{\Lambda_T}{\Lambda_t}\right) - \int_t^T (b_{s,a} - \beta_{s,a}) dW\indices{_s^a}\right).$$ Of course, this needs to remain valid if the conditional expectation with respect to $F_t$ is taken. But again in contrast to the corresponding statement in the original treatise \cite{GIGA}, this can meaningfully first be done in the last equation and then the second part does \emph{not} drop out. As before, this becomes correct when the modified expectation operator $\mathcal{E}$ introduced in definition \ref{NelsonDE} is used. Now calculate the logarithm of the $\Lambda$-process:
\begin{align*}
d\log(\Lambda)_t &= \frac{1}{\Lambda_t} \Lambda_t \left(\left(-a_t + b\indices{_t^a} \beta_{t,a}\right) dt - \frac{1}{2} \beta_{t,a} \beta\indices{_t^a} dt - \beta_{t,a} dW\indices{_t^a}\right) \\
&= \sum_{i=1}^n \left(-\alpha\indices{_t^i} \pi_{t,i} + \sigma\indices{_t^i_a} \pi_{t,i} \beta_{t,a}\right) dt - \frac{1}{2} \beta_{t,a} \beta\indices{_t^a} dt - \beta_{t,a} dW\indices{_t^a} \\
&= \sum_{i=1}^n \left((-\alpha_t^* - \alpha_{t,A} J\indices{_t^{i A}}) \pi_{t,i} - \frac{1}{2} \beta_{t,a} \beta\indices{_t^a}\right) dt - \beta_{t,a} dW\indices{_t^a} \\
&= -\Gamma\indices{_t^\pi} - \frac{1}{2} \beta_{t,a} \beta\indices{_t^a} dt - \beta_{t,a} dW\indices{_t^a}.
\end{align*}
Here the portfolio-valued Malaney-Weinstein connection appears again. Moreover under the Novikov condition, the Radon-Nikodým derivative $$\frac{d\mathbb{P}}{d\mathbb{P}^*} := \exp\left(-\frac{1}{2} \int_t^T \beta_{s,a} \beta\indices{_t^a} ds + \int_t^T \beta_{s,a} dW_s^{*a}\right)$$ is a martingale. Then the above expression for the advanced portfolio can be made more explicit by taking the conditional expectation value and using those statements:
\begin{align*}
Z\indices{_t^\pi} &= \mathbb{E}_\mathbb{P}[Z\indices{_t^\pi} \mid F_t] \\
&= \mathbb{E}_\mathbb{P}\left.\left[Z\indices{_T^\pi} \exp\left(\int_t^T d\log(\Lambda)_s - \int_t^T (b_{s,a} - \beta_{s,a}) dW\indices{_s^a}\right) \right\vert F_t\right] \\
&= \mathbb{E}_\mathbb{P}\left.\left[Z\indices{_T^\pi} \exp\left(\int_t^T \left(-\Gamma\indices{_s^\pi} - \frac{1}{2} \beta_{s,a} \beta\indices{_s^a} ds - \beta_{s,a} dW\indices{_s^a}\right) - \int_t^T (b_{s,a} - \beta_{s,a}) dW\indices{_s^a}\right) \right\vert F_t\right] \\
&= \mathbb{E}_{\mathbb{P}^*}\left.\left[\frac{d\mathbb{P}}{d\mathbb{P}^*} Z\indices{_T^\pi} \exp\left(-\int_t^T \left(\Gamma\indices{_s^\pi} + \frac{1}{2} \beta_{s,a} \beta\indices{_s^a} ds + b_{s,a} dW\indices{_s^a}\right)\right) \right\vert F_t\right] \\
&= \mathbb{E}_{\mathbb{P}^*}\left.\left[Z\indices{_T^\pi} \exp\left(-\int_t^T \left(\Gamma\indices{_s^\pi} + b_{s,a} dW\indices{_s^a} + \beta_{s,a} dW_s^{*a}\right)\right) \right\vert F_t\right].
\end{align*}
This expression still contains Brownian motions under $\mathbb{P}$ resp. $\mathbb{P}^*$, but no other perturbation terms. The normal expectation operator $\mathcal{E}$ cancels these contributions, the first one in the initial expectation under $\mathbb{P}$ and the second one in the expectation under $\mathbb{P}^*$.
\end{proof}

The representation of $\Gamma\indices{_t^\pi}$ in this theorem beautifully shows that its curvature vanishes if and only if the portfolio weights $\pi_{t,i}$ are kept constant or the $\alpha_{t,A}$ and hence arbitrage vanishes. The second statement is the stochastic equivalent to the required property $(d-A\indices{_t^\pi}) Z\indices{_t^\pi} = 0$ and as its solution $$\frac{Z\indices{_T^\pi}}{Z\indices{_t^\pi}} = e^{\int_t^T A\indices{_s^\pi}}$$ for a covariant derivative in the deterministic case.

\section{Conclusion}
In this article the two theories of Geometric Arbitrage Theory \cite{GIGA} and Stochastic Portfolio Theory \cite{SPT} have been brought together. The former has produced several gauge theoretic and differential geometric measures of arbitrage in the financial market. So far they have mostly been considered on the single asset side and as short-term deviations from the equilibrium. The latter deal with long-term effects on the portfolio side. So it is a rather natural question how to combine those two points of view.

It could be shown that the geometric decomposition of the securities' price dynamics (definition \ref{AlphaSplit}) applied to relative portfolios (proposition \ref{APropo}) explains the excess return arbitrage in a beautiful way. The contribution that is exogeneously given by the market is the same term as for single stocks. And this term was shown in \cite{GIGA} to generate arbitrage. It is therefore the same part that determines portfolio arbitrage in the long run.

Another contribution of this work is an accurate formalization of the Malaney-Weinstein connection \cite{Malaney, Weinstein} in the stochastic context. Moreover, it is also transferred to the portfolio context. Although this one is improper as a gauge connection, it could be shown that it correctly describes the parallel transport of portfolios (theorem \ref{PortStockMW}). Thereby, tools are established to evaluate whole portfolios in the long run with methods from gauge theory.

\end{document}